\newcommand{\norm}[1]{\left|\left|#1\right|\right|}
\newcommand{\Real}{{\mathbbm{R}}}
\newcommand{\sI}{{\cal{L}}}
\newcommand{\sO}{{\cal{O}}}
\newcommand{\rN}{\mathcal{N}}
\newcommand{\vx}{\mathbf{x}}
\newcommand{\vy}{\mathbf{y}}
\newcommand{\vn}{\mathbf{n}}
\newcommand{\vw}{\mathbf{w}}
\newcommand{\vh}{\mathbf{h}}
\newcommand{\vP}{\mathbf{P}}
\newcommand{\vH}{\mathbf{H}}
\newcommand{\vA}{\mathbf{A}}
\newcommand{\vQ}{\mathbf{Q}}
\newcommand{\vR}{\mathbf{R}}
\newcommand{\vI}{\mathbf{I}}
\DeclareMathOperator{\E}{\mathbb{E}}
\newcommand{\argmax}{\arg\!\max}
\newtheorem{definition}{Definition}[]
\newtheorem{proposition}{Proposition}[]
\newtheorem{theorem}{Theorem}[]
\def\MSE{{\mathrm{MSE}}}
\def\P{{\mathbf P}}
\def\I{{\mathbf I}}
\def\F{{\mathbf F}}
\def\x{{\mathbf x}}
\def\A{{\mathbf A}}
\def\P{{\mathbf P}}
\def\R{{\mathbbm{R}}}
\def\I{{\mathbf I}}
\def\a{{\mathbf a}}
\def\L{{\cal L}}
\def\S{{\cal S}}
\def\T{{\cal T}}
\def\X{{\cal X}}
\newcommand{\ts}{\textsuperscript}
\let\oldref\ref
\renewcommand{\ref}[1]{(\oldref{#1})}
\newcommand{\RNum}[1]{\uppercase\expandafter{\romannumeral #1\relax}}
\renewcommand{\fnum@figure}{Fig.~\thefigure}
\title{\LARGE \bf
Near-Optimal Distributed Estimation for a Network of Sensing Units Operating Under Communication Constraints
}
\author{Abolfazl Hashemi$^{\ast}$, Osman Fatih Kilic$^{\ast}$, and Haris Vikalo$^{}$ %
\thanks{Authors are with the Department of Electrical and Computer Engineering, University of Texas at Austin, Austin, TX 78712 USA.}%
\thanks{$\ast$These authors contributed equally to the manuscript.}
}
\begin{document}
%
\maketitle
\begin{abstract}
We study the problem of distributed state estimation in a network of sensing units that can exchange their measurements but the rate of communication between the units is constrained. The units collect noisy, possibly only partial observations of the unknown state; they are assisted by a relay center which can communicate at a higher rate and schedules the exchange of measurements between the units. We consider the task of minimizing the total mean-square estimation error of the network while promoting balance between the individual units' performances. This problem is formulated as the maximization of a monotone objective function subject to a cardinality constraint. By leveraging the notion of weak submodularity, we develop an efficient greedy algorithm for the proposed formulation and show that the greedy algorithm achieves a constant factor approximation of the optimal objective.
Our extensive simulation studies illustrate the efficacy of the proposed formulation and the greedy algorithm.
\end{abstract}
%
\section{Introduction}\label{sec:intro}
The problem of distributed estimation in a network of sensing units that are capable of exchanging information arises in a variety of settings. An example is the network of autonomous vehicles equipped with a number of sensors that enable tasks such as identification of the navigation paths and estimation of the position, velocity, and trajectory of nearby objects. However, state-of-the-art sensing technologies including radar, cameras, and LIDAR have limited sensing range and typically only provide information about objects in the line-of-sight. To overcome the lack of adequate information in cluttered and partially observed environments, autonomous vehicles communicate with centralized relay centers and other vehicles using vehicle-to-infrastructure and vehicle-to-vehicle communication protocols. An autonomous vehicle can generate up to one TB of data in a single trip \cite{lu2014connected}. Handling enormous amounts of sensing data in a network of autonomous vehicles presents a major challenge for the current communication technologies such as the dedicated short-range communication (DSRC) schemes \cite{kenney2011dedicated} that are characterized by limited range and transmission rates.

Given a network of units, it is generally of interest to design an inference scheme that minimizes the overall estimation error; however, in many applications it is of critical importance that each unit generates a reliable estimate so as not to adversely affect decision making of other units in the network (e.g., in the context of autonomous vehicles, a unit with high estimation error may need to slow down and force other units to do the same). Therefore, we are interested in minimizing the total mean-square estimation error for the entire network while promoting balanced performance of the individual units.

The well-known sensor selection problem \cite{krause2008near,joshi2009sensor,shamaiah2010greedy,shamaiah2012greedy,tzoumas2016sensor,ma} can be thought of as a special instance of the described task. There, one is interested in the design of an optimal estimation scheme under communication constraints for a single unit (i.e., the fusion center) which collects sensor data. More specifically, due to various practical considerations and limitations on resources including computational and communication constraints, the fusion center typically aggregates information by querying only a small subset of the available sensors. Since finding an optimal solution to the sensor selection problem is NP-hard, state-of-the-art sensor selection algorithms attempt to find an approximate solution in an iterative fashion by leveraging a greedy heuristic. For instance, \cite{shamaiah2010greedy,shamaiah2012greedy,tzoumas2016sensor} consider a greedy algorithm for the $\log \det$ maximization formulation of the sensor selection problem. Since the $\log \det$ of the Fisher information matrix is a monotone submodular function \cite{nemhauser1978analysis}, the greedy scheme developed in \cite{shamaiah2010greedy,shamaiah2012greedy,tzoumas2016sensor} is a $(1-1\slash e)$-approximation algorithm. A randomized greedy approach proposed in \cite{ma} leverages the weak submodularity of the mean-square error (MSE) objective in the sensor selection problem. Greedy sensor selection solvers are employed in various related problems in control systems, signal processing, and machine learning. Examples include sensor selection for Kalman filtering \cite{nordio2015sensor,shamaiah2010greedy,tzoumas2016sensor}, batch state estimation and stochastic process estimation \cite{tzoumas2016scheduling,tzoumas2016near}, minimal actuator placement \cite{tzoumas2015minimal}, subset selection in machine learning \cite{mirzasoleiman2014lazier}, voltage control and meter placement in power networks \cite{gensollen2016submodular,liu2016towards}, and sensor scheduling in wireless sensor networks \cite{shamaiah2012greedy,nordio2015sensor}. None of these methods, nor the related distributed and consensus-based schemes in \cite{mirzasoleiman2016distributed,das2015distributed,battistelli2015consensus}, are directly applicable for the setting that we consider: estimation in resource-constrained networks of sensing units with the goal of simultaneously minimizing the total MSE of the network while promoting balanced performance of the individual units.

In this paper, we address the above challenges by making the following key contributions:
\begin{itemize}
\item  We formulate the task of state estimation in a network of sensing units under a constraint on communication resources and a demand for balanced performance of the individual units as the problem of maximizing a monotone objective function subject to cardinality constraint. The cardinality constraint naturally captures the aforementioned communication constraint. The proposed objective function consists of two parts: the total MSE of the network and a regularizing term that promotes balanced performance of individual units.
 \item We develop an efficient greedy algorithm for the proposed NP-hard formulation. By leveraging the notion of weak submodularity, we show that the greedy algorithm achieves a constant factor approximation of the optimal schedule.
\item In simulation studies, we illustrate that our proposed formulation promotes balanced performance of the individual units while minimizing the total MSE of the network.
\end{itemize}
\section{Problem Description}\label{sec:sys}
\subsection{Notation and preliminaries}
First, we briefly summarize the notation used in the paper. Sets are denoted by calligraphic letters,  $[n] := \{1,2,\dots,n\}$, and $|\S|$ denotes the cardinality of set $\S$. Bold capital letters are used to denote matrices while bold lowercase letters represent column vectors. $\A_{ij}$ denotes the $(i,j)$ entry of $\A$, $\a_j$ is the $j\ts{th}$ row of $\A$, $\A_{\S}$ is a submatrix of $\A$ that contains rows indexed by the set $\S$, and $\lambda_{max}(\A)$
and $\lambda_{min}(\A)$ are the largest and the smallest eigenvalues of $\A$, respectively. Finally, $\I_n \in \R^{n\times n}$ is the identity matrix.

Next, we overview some definitions that are essential in the development and analysis of the proposed framework.
\begin{definition}
\label{def:submod}
Set function $f:2^\X\rightarrow \mathbb{R}$ is submodular if 
\begin{equation*}
f(\S\cup \{j\})-f(\S) \geq f(\T\cup \{j\})-f(\T)
\end{equation*}
for all subsets $\S\subseteq \T\subset \X$ and $j\in \X\backslash \T$. The term $f_j(\S)=f(\S\cup \{j\})-f(\S)$ is the marginal value of adding element $j$ to set $\S$. Furthermore, $f$ is monotone if $f(\S)\leq f(\T)$ for all $\S\subseteq \T\subseteq \X$.
\end{definition}

\begin{definition}
The maximum element-wise curvature of a monotone non-decreasing function $f$ is defined as
 \begin{equation*}
 {\cal C}_{f}=\max_{l\in [n-1]}{\max_{(\S,\T,i)\in \mathcal{\X}_l}{f_i(\T)\slash f_i(\S)}},
 \end{equation*}
where $\mathcal{\X}_l = \{(\S,\T,i)|\S \subset \T \subset \X, i\in \X \backslash \T, |\T\backslash \S|=l,|\X|=n\}$. 
\end{definition}
The maximum element-wise curvature is a closely related concept to submodularility and essentially quantifies  how close the set function is to being submodular. It is worth noting a set function $f(\S)$ is submodular if and only if its maximum element-wise curvature satisfies 
${\cal C}_{f} \le 1$.

\subsection{System model}
\begin{figure}[t]
\includegraphics[width=3.5in]{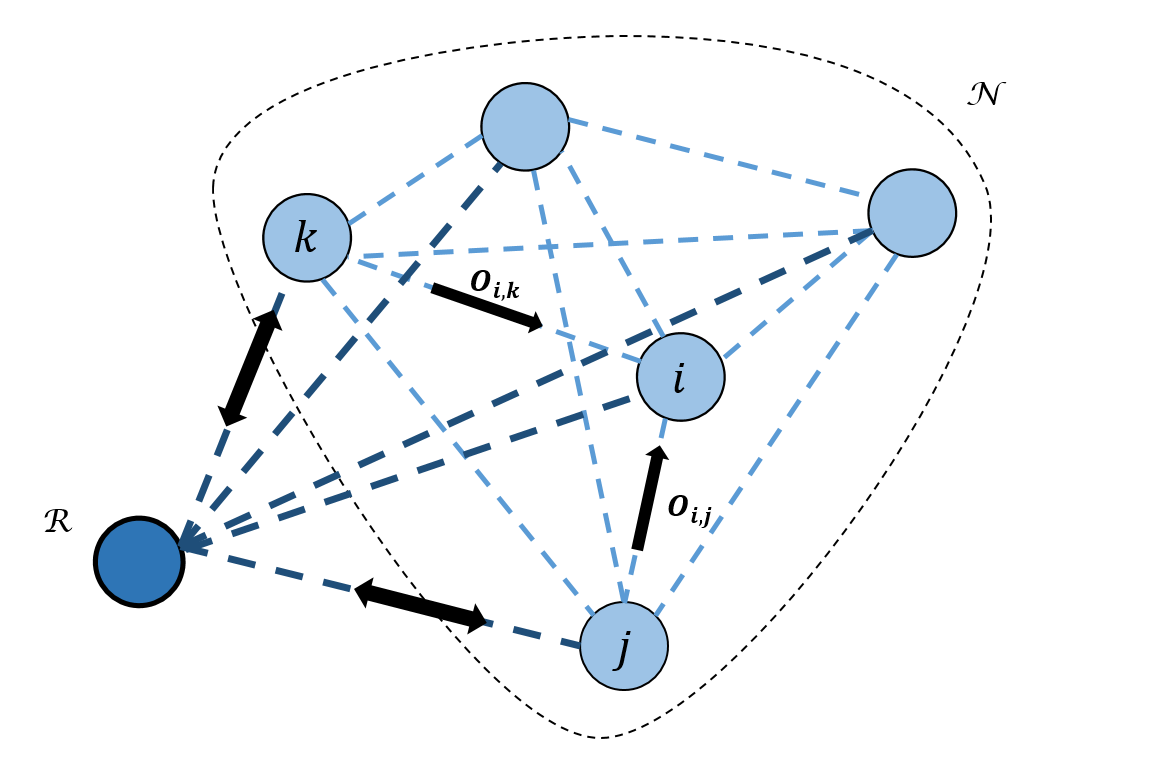}
\centering
\caption{A fully connected network of units with sensing, communication, and processing capabilities; the communication between the units is constrained. Also shown is a relay node $R$ that schedules exchange of observations $\sO_{i,j}$ and  $\sO_{i,k}$ to node $i$ from nodes $j$ and $k$, respectively.}
\label{pd:network}
\vspace{-0.5cm}
\end{figure}
We consider a fully connected distributed network of $m$ nodes with sensing, communication, and processing capabilities.\footnote{Throughout the paper, the words unit and node are used interchangeably.}
The network also includes a relay node that schedules the exchange of information among the units, i.e., the relay node decides which information should be communicated from one unit to another. We assume that the communication among units is constrained but the channels between units and the relay node have large bandwidth (e.g., the vehicle-to-vehicle and vehicle-to-infrastructure communication settings, respectively \cite{miller2008vehicle}).

One can think of the described network as having an undirected graph structure, where edges and vertices represent the nodes and the connections among them, respectively. An example of such a network is illustrated in Fig. \oldref{pd:network}. There, the relay node $R$ schedules exchange of observations $\sO_{i,j}$ and  $\sO_{i,k}$ to node $i$ from nodes $j$ and $k$, respectively. 

To model the dynamics of the underlying hidden state $\vx \in \R^n$, we assume a state-space model
\begin{align*}
\vx_{t+1} = \vA_{t} \vx_t + \vw_t,
\end{align*}
where $\vA_t \in \R^{n\times n}$  is the state-transition matrix and $\vw_t \in \R^n$ is the zero-mean Gaussian state noise with covariance $\vQ \in \Real^{n \times n}$. We further assume that the state $\x_t$ is uncorrelated with $\vw_t$ and the initial state $\vx_0$ is sampled from a Gaussian distribution, i.e., $\x_0 \sim \rN(0,\mathbf{\Sigma}_x)$. 

The $i\ts{th}$ node in the network acquires partial noisy linear observations of the underlying state according to
\begin{align*}
\vy_{i,t} = \vH_{i,t}\vx_t + \vn_{i,t},
 \end{align*}
where $i\in [m]$ and $\vH_{i,t}$ denotes the matrix that selects observable components of the underlying state. Let $\sI_{i,t}$ denote the set of noisy observations of the components of $\vx_t$ available to the $i^{th}$ node (i.e., the noisy observations collected by the vector $\vy_{i,t}$). Here, we do not make any assumptions on the structure of 
$\vH_{i,t}$.\footnote{As we proceed, for the simplicity of notation we may omit the time index.} We assume that the observation noise $\vn_{i,t}\in\Real^{|\sI_{i,t}|}$ is spatially and temporally independent zero-mean Gaussian noise with covariance $\vR_{i,t} = \sigma_i^2\vI_{|\sI_{i,t}|}$.

Without communication, each node only uses its acquired local measurements and performs Kalman filtering to estimate the underlying state $\vx_t$ by minimizing the mean-squared error of the linear least-mean square error (LLMSE) estimator. However, cooperation can greatly enhance the learning capabilities of the individual units as well as the entire network.

Let $\vP_{\sI_i,t}$ be the filtered error covariance matrix of the $i\ts{th}$ agent at time $t$ obtained by using only the local measurements $\L_{i,t}$.
Then,
\begin{align}
\P_{\sI_i,t} = \left(\vP_{i,t-1}^{-1}+\frac{1}{\sigma_i^2} \sum_{i_j\in\sI_{i,t}} \vh_{i_j} \vh_{i_j}^\top\right)^{-1} 
\end{align}
where 
\begin{align}
\vP_{i,t} = \vA_t \P_{\sI_i,t} \vA_t^\top + \vQ
\end{align}
is the prediction error covariance matrix. If the relay node $R$ at time $t$ allocates observations to node $i$, agent $i$ receives the subset $\sO_i$ of the measurements from the agents selected by $R$. Let
\begin{equation}
\F_{i,t} = \vP_{i,t-1}^{-1}+\frac{1}{\sigma_i^2} \sum_{i_j\in\sI_{i,t}} \vh_{i_j} \vh_{i_j}^\top
\end{equation}
be the Fisher information matrix associated with the $i^{th}$ node that determines the prior information and confidence of the node $i$ before receiving the partial observation set $\sO_i$. The filtered error covariance matrix of the $i^{th}$ node will then be updated according to
\begin{align}
 \vP_{\sI_i \cup \sO_i} = \left(\F_{i,t} + \sum_{j_k\in \sO_i} \frac{1}{\sigma_j^2}\vh_{j_k}\vh_{j_k}^\top\right)^{-1}.
\end{align}
The global MSE of the network at time $t$ is defined as the sum of the MSEs of the individual nodes. In particular, 
\begin{align}
\MSE_t = \sum_{i=1}^m \E{\norm{\x_t-\hat{\x}_{i,t}}^2},
\end{align}
where $\hat{\x}_{i,t}$ is the linear estimate of $\x_t$ computed by the $i^{th}$ unit at time $t$. Since the MSE is equivalent to the trace of the filtered error covariance matrix,
\begin{align}
\MSE_t = \sum_{i=1}^m \mathrm{Tr}\left(\left(\F_{i,t} + \sum_{j_k\in \sO_i} \frac{1}{\sigma_j^2}\vh_{j_k}\vh_{j_k}^\top\right)^{-1}\right).
\end{align}

As stated in Section \oldref{sec:intro}, communication constraints limit the amount of information that can be exchanged among the nodes of the network at any given time step. More specifically, we assume that the subsets of partial observations $\{\sO_i\}_{i=1}^m$ scheduled to be communicated to each agent should satisfy $\sum_{i=1}^m |\sO_i| \leq K$,
where $K$ denotes the total number of observations that are allowed to be exchanged among the nodes of the network.
The relay node decides how to allocate measurements to individual nodes by solving the optimization problem
\begin{equation}\label{eq:sensor}
\begin{aligned}
& \underset{\sO_i}{\text{min}}
\quad \sum_{i=1}^m \mathrm{Tr}\left(\left(\F_{i,t} + \sum_{j_k\in \sO_i} \frac{1}{\sigma_k^2}\vh_{j_k}\vh_{j_k}^\top\right)^{-1}\right) \\
& \text{s.t.}\hspace{0.5cm}  \sO_i \subset \cup_{i=1}^m\L_i\text{,} \quad \forall i\in [m] \\
& \quad\hspace{0.5cm}  \sum_{i=1}^m |\sO_i| \leq K.
\end{aligned}
\end{equation}
A comparison of \ref{eq:sensor} to a (simpler) sensor selection problem reveals that finding the optimal solution to \ref{eq:sensor} is generally NP-hard. In addition to being computationally challenging, optimization \ref{eq:sensor} does not necessary lead to a solution that would promote balanced MSE performance of the individual units; this point is illustrated by the simulation results in Section \oldref{sec:sim}. To this end, we next add a regularization term to the objective function so as to promote balanced performance while still finding a near-optimal solution to the MSE estimation problem for the entire network.
\section{Promoting balanced performance of the  units}\label{sec:model} 
Let $\S \subseteq \X$ where $\X = [m]\times [m] \times [\max_{i\in [m]}|\L_i|]$ is a ground set for the set function
\begin{multline}\label{eq:f}
f(\S) = \sum_{i=1}^m \mathrm{Tr}\left(\F_{i,t}^{-1}\right) \\ \hspace{2cm }- \mathrm{Tr}\left(\left(\F_{i,t} + \sum_{(i,j,k)\in \S} \frac{1}{\sigma_{j}^2}\vh_{j_k}\vh_{j_k}^\top\right)^{-1}\right).
\end{multline}
The triplet $(i,j,k)$ denotes that the $k\ts{th}$ measurement of node $j$ is communicated to node $i$. The function $f(\S)$ is inversely related to the total MSE of the network. To arrive at a measurement exchange scheme that promotes balanced performance across the network units, we propose the optimization problem
\begin{equation}\label{eq:balance}
\begin{aligned}
& \underset{\S}{\text{max}}
\quad f(\S)+ \gamma g(\S) \\
& \text{s.t.}\hspace{0.5cm}  \sO_i \subset \cup_{i=1}^m\L_i\text{,} \quad \forall i\in [m] \\
& \quad\hspace{0.5cm}  \sum_{(i,-,-)\in \S} |\sO_i| \leq K\\
& \quad\hspace{0.5cm} \S \subseteq [m]\times [m] \times [\max_{i\in [m]}|\L_i|],
\end{aligned}
\end{equation}
where
\begin{equation}\label{eq:g}
g(S) = \sum_{(i,-,-) \in \S} \log\left(1+\frac{|\sO_i|}{|\L_i|}\right)
\end{equation}
is a regularization function and $\gamma\geq 0$ denotes the regularization parameter that determines the significance of balancing with respect to the goal of minimizing the total MSE of the entire network. On one hand, when $\gamma = 0$ the relay node $R$ attempts to find a schedule that results in the lowest total MSE  while disregarding potential imbalance in performance of the individual units. On the other hand, when $\gamma$ is relatively large, the exchange of information determined by $R$ is such that the differences between the MSEs of individual sensing nodes in the network become as small as possible. Notation $(i,-,-) \in \S$ in \ref{eq:balance} and \ref{eq:g} implies that it does not matter for $g(\S)$ which measurements are communicated to the $i^{th}$ node; instead, it is the number of communicated measurements that is used to promote balanced performance.

Note that the proposed formulation \ref{eq:balance} is an NP-hard combinatorial optimization problem, as it generalizes \ref{eq:sensor}. However, as we show next, the proposed objective function $u(\S) = f(\S)+ \gamma g(\S)$ is monotone weak submodular, i.e., under some mild conditions it is characterized with a bounded maximum element-wise curvature. Hence, one can find an approximate solution to \ref{eq:balance} using a greedy algorithm, as we state in the next section.

We proceed by providing two propositions to characterize the combinatorial properties of $f(\S)$ and $g(\S)$. For simplicity of the stated results, we assume that $\vR_{i,t} = \sigma^2\vI_{|\sI_{i,t}|}$, i.e., use the same measurement noise statistics for all sensing nodes of the network (a generalization is straightforward).
\begin{proposition}
Define $\lambda_M = \max_{i\in [m]}\lambda_{max}(\F_{i,t})$, $\lambda_m =  \min_{i\in [m]}\lambda_{min}(\F_{i,t})$, and $\vH_t = [\vH_{1,t}^\top,\dots,\vH_{1,m}^\top]^\top$. Let $\mathcal{C}_f$ be the maximum element-wise curvature of $f(\S)$. If
\begin{equation}\label{eq:cond1}
\frac{1}{\sigma^2}\lambda_{max}(\vH_t^\top\vH_t) \leq \lambda_M,
\end{equation}
then it holds that 
\begin{equation}
\mathcal{C}_f \leq \left(2\frac{\lambda_M }{\lambda_m }\right)^3.
\end{equation}
\end{proposition}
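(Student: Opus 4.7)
The plan is to bound the ratio of marginals $f_j(\T)/f_j(\S)$ for an arbitrary element $j=(i^\ast, j^\ast, k^\ast)$ and sets $\S\subset\T$ with $j\notin\T$, by upper bounding $f_j(\T)$ and lower bounding $f_j(\S)$ separately. First, I would observe that the summand indexed by agent $i$ in $f(\S)$ depends only on the triples in $\S$ whose first coordinate equals $i$, so the marginal $f_j(\S)$ depends only on the $i^\ast$-slice of $\S$. Abbreviating $\mathbf{M}(\S)=\F_{i^\ast,t}+\sigma^{-2}\sum_{(i^\ast,j,k)\in\S}\vh_{j_k}\vh_{j_k}^\top$ and $\tilde\vh=\sigma^{-1}\vh_{j^\ast_{k^\ast}}$, the Sherman--Morrison identity applied to the trace difference yields
\begin{equation*}
f_j(\S)=\frac{\tilde\vh^\top\mathbf{M}(\S)^{-2}\tilde\vh}{1+\tilde\vh^\top\mathbf{M}(\S)^{-1}\tilde\vh},
\end{equation*}
and analogously for $f_j(\T)$ with $\mathbf{M}(\T)$.

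Next I would collect the eigenvalue estimates needed to control these expressions. Because $\F_{i,t}\succeq\lambda_m\I$, we have $\mathbf{M}(\S)\succeq\lambda_m\I$ and likewise for $\mathbf{M}(\T)$. Since each $\vh_{j_k}$ is a row of the stacked matrix $\vH_t$ and triples appear at most once in a set, $\sum_{(i^\ast,j,k)\in\T}\vh_{j_k}\vh_{j_k}^\top\preceq\vH_t^\top\vH_t$, so the hypothesis $\sigma^{-2}\lambda_{max}(\vH_t^\top\vH_t)\le\lambda_M$ gives $\mathbf{M}(\T)\preceq 2\lambda_M\I$, which by $\S\subset\T$ also bounds $\mathbf{M}(\S)$. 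The same observation applied to the single rank-one matrix $\tilde\vh\tilde\vh^\top\preceq\sigma^{-2}\vH_t^\top\vH_t$ yields $\|\tilde\vh\|^2=\lambda_{max}(\tilde\vh\tilde\vh^\top)\le\lambda_M$.

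Plugging these Rayleigh-quotient bounds into the closed form produces
\begin{equation*}
f_j(\T)\le\tilde\vh^\top\mathbf{M}(\T)^{-2}\tilde\vh\le\frac{\|\tilde\vh\|^2}{\lambda_m^2},
\end{equation*}
and, using $\|\tilde\vh\|^2\le\lambda_M$ to simplify $1+\|\tilde\vh\|^2/\lambda_m\le 2\lambda_M/\lambda_m$ (valid since $\lambda_M\ge\lambda_m$),
\begin{equation*}
f_j(\S)\ge\frac{\|\tilde\vh\|^2/(2\lambda_M)^2}{1+\|\tilde\vh\|^2/\lambda_m}\ge\frac{\|\tilde\vh\|^2\,\lambda_m}{8\lambda_M^3}.
\end{equation*}
Dividing cancels the $\|\tilde\vh\|^2$ factor and delivers $f_j(\T)/f_j(\S)\le 8\lambda_M^3/\lambda_m^3=(2\lambda_M/\lambda_m)^3$, which is exactly the claimed bound after taking the maximum over triples. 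The main obstacle is securing precisely the constant $2$ in $\mathbf{M}(\T)\preceq 2\lambda_M\I$; this is exactly where the hypothesis $\sigma^{-2}\lambda_{max}(\vH_t^\top\vH_t)\le\lambda_M$ is essential, and a looser bound would inflate the final cube. The remaining ingredients --- Sherman--Morrison, the monotonicity of $x\mapsto x/(1+x)$, and Rayleigh quotients --- are routine.
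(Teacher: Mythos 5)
Your proof is correct, but it takes a genuinely different route from the paper. The paper's proof is a reduction-plus-citation: it observes that $f(\S)$ decomposes node-wise, invokes Theorem 1 of \cite{ma} to bound the per-node curvature $\mathcal{C}_i$ under the per-node condition $\frac{1}{\sigma^2}\lambda_{max}(\vH_{i,t}^\top\vH_{i,t})\le\lambda_{max}(\F_{i,t})$, notes that the stated hypothesis \ref{eq:cond1} implies that condition for every $i$, and concludes via $\mathcal{C}_f\le\max_i\mathcal{C}_i$ (which is legitimate here precisely because the marginal of $(i,j,k)$ depends only on the $i$-slice of the set, as you also note). You instead carry out the underlying computation from scratch: Sherman--Morrison gives the closed form $f_j(\S)=\tilde\vh^\top\mathbf{M}(\S)^{-2}\tilde\vh\slash(1+\tilde\vh^\top\mathbf{M}(\S)^{-1}\tilde\vh)$, and the hypothesis is used exactly where it should be, namely to get $\mathbf{M}(\T)\preceq 2\lambda_M\I$ and $\|\tilde\vh\|^2\le\lambda_M$, after which the Rayleigh-quotient bounds and the cancellation of $\|\tilde\vh\|^2$ deliver $(2\lambda_M/\lambda_m)^3$. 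Each step checks out, including the inequality $1+\|\tilde\vh\|^2/\lambda_m\le 2\lambda_M/\lambda_m$, which is valid because $\lambda_M\ge\lambda_m$ by their definitions. What your version buys is self-containedness and transparency about where condition \ref{eq:cond1} enters; what it drops is the slightly sharper intermediate factor $\frac{1+\sigma^2\lambda_{min}(\F_{i,t})}{1+2\sigma^2\lambda_{max}(\F_{i,t})}\le 1$ that the cited theorem provides before the paper discards it anyway. The only technicality worth a remark in either treatment is the degenerate case $\vh_{j^\ast_{k^\ast}}=0$, where both marginals vanish and the curvature ratio is vacuous.
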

\begin{proof}
Note that $f(\S)$ is the sum of the additive inverse of the MSE of the sensing nodes that receive partial observations. Let $i$ be one such node. Theorem 1 in \cite{ma} states that if
\begin{equation}\label{eq:cond2}
\frac{1}{\sigma^2}\lambda_{max}(\vH_{i,t}^\top\vH_{i,t}) \leq \lambda_{max}(\F_{i,t}),
\end{equation}
then the maximum element-wise curvature of the additive inverse of the MSE of node $i$, $\mathcal{C}_i$, satisfies 
\begin{equation}
\begin{aligned}
\mathcal{C}_i &\leq \left(2\frac{\lambda_{max}(\F_{i,t})}{\lambda_{min}(\F_{i,t}) }\right)^3 \frac{1+\sigma^2\lambda_{min}(\F_{i,t})}{1+2\sigma^2\lambda_{max}(\F_{i,t})} \\
&\leq \left(2\frac{\lambda_{max}(\F_{i,t})}{\lambda_{min}(\F_{i,t}) }\right)^3.
\end{aligned}
\end{equation}
It is straightforward to see that the condition stated in \ref{eq:cond1} implies  \ref{eq:cond2} and we have $\max_{i\in [m]} \mathcal{C}_i\leq (2\frac{\lambda_M }{\lambda_m })^3 $. Hence, definition of $\lambda_M$, $\lambda_m$, and $f(\S)$ yields $\mathcal{C}_f \leq \max_{i\in [m]} \mathcal{C}_i$ which in turn completes the proof.
\end{proof}
\begin{proposition}
The set function $g(\S)$ is a monotone submodular function.
\end{proposition}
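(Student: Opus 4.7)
The plan is to reduce the statement about $g(\S)$ to the elementary fact that a concave function of a counting variable yields a submodular set function. The key observation is that $g(\S)$ only depends on $\S$ through the per-node counts
\begin{equation*}
n_i(\S) = |\{(i',j,k)\in \S : i'=i\}|, \qquad i\in[m],
\end{equation*}
since $|\sO_i|=n_i(\S)$ in the notation of the problem. Rewriting with the convention $\log(1+0)=0$,
\begin{equation*}
g(\S) \;=\; \sum_{i=1}^{m} h_i\bigl(n_i(\S)\bigr), \qquad h_i(n):=\log\!\left(1+\frac{n}{|\L_i|}\right).
\end{equation*}

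First I would verify monotonicity. Adding any triple $(i,j,k)\notin \S$ to $\S$ increases $n_i(\S)$ by exactly $1$ and leaves all other counts unchanged. Since each $h_i$ is strictly increasing on the nonnegative integers, every singleton marginal is strictly positive, which gives $g(\S)\le g(\T)$ for $\S\subseteq \T$.

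Next I would establish submodularity by comparing marginal gains. Fix $\S\subseteq \T\subseteq \X$ and an element $(i,j,k)\in \X\setminus \T$. Using the decomposition above and the fact that only the $i$-th summand changes when we add $(i,j,k)$,
\begin{equation*}
g_{(i,j,k)}(\S) = h_i\bigl(n_i(\S)+1\bigr)-h_i\bigl(n_i(\S)\bigr),
\end{equation*}
and similarly with $\T$ in place of $\S$. The map $x\mapsto \log(1+x/|\L_i|)$ is concave on $[0,\infty)$, so its discrete first difference $n\mapsto h_i(n+1)-h_i(n)$ is non-increasing in $n$. Because $\S\subseteq \T$ implies $n_i(\S)\le n_i(\T)$, we conclude
\begin{equation*}
g_{(i,j,k)}(\S) \;\ge\; g_{(i,j,k)}(\T),
\end{equation*}
which is exactly the defining inequality in Definition~\oldref{def:submod}. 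Hence $g$ is monotone submodular.

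There is no substantive obstacle here; the only subtlety is making explicit that $g$ factors through the per-node counts so that a generic ``concave-of-counts implies submodular'' argument applies cleanly. The proof is therefore just (i) the reduction to $\sum_i h_i(n_i(\S))$, (ii) concavity of each $h_i$, and (iii) invoking the standard equivalence between concavity and diminishing discrete differences.
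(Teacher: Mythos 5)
Your proof is correct and follows essentially the same route as the paper's: both reduce everything to the observation that the marginal gain of adding a triple with first coordinate $i$ equals $\log\left(1+\frac{1}{|\sO_i|+|\L_i|}\right)$, which depends only on the count $|\sO_i|$ and is decreasing in it. The paper carries this out by an explicit computation of the log difference and a two-case comparison ($i\neq i'$ versus $i=i'$) for sets differing by one element, whereas you package the identical computation as the standard ``sum of concave functions of per-node counts is submodular'' fact applied to general $\S\subseteq\T$; the content is the same.
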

\begin{proof}
In order to prove the results, we first find the marginal gain $g_{(i,-,-)}(\S)$ that in the following argument is denoted by $g_{i}(\S)$ (with a slight abuse of notations for the sake of readability). By the definition of $g(\S)$ and the marginal gain,
\begin{equation}
\begin{aligned}
g_{i}(\S) &= \log\left(1+\frac{|\sO_i|+1}{|\L_i|}\right) - \log\left(1+\frac{|\sO_i|}{|\L_i|}\right)\\
& = \log\left(1+\frac{1}{|\sO_i|+|\L_i|}\right).
\label{eq:g_marginal}
\end{aligned}
\end{equation}
Since $\log(.)$ is a monotonically increasing function, $|\sO_i|\geq 0$, $|\L_i|> 0$, $g_{i}(\S)>0$ and hence $g(\S)$ is monotone. We now prove the second part of the statement, i.e., submodularity of $g(\S)$. Specifically, we should prove that the marginal gain of adding $(i,j,k)$ to $\S$ is greater than adding it to a larger set $\S\cup\{(i',j',k')\}$ where $(i,j,k) \neq (i',j',k')$. Two cases might happen. First, assume that $i \neq i'$. Then, 
\begin{equation}\label{eq:mg1}
g_{i}(\S) = g_{i}(\S\cup\{(i',j',k')\}) = \log\left(1+\frac{1}{|\sO_i|+|\L_i|}\right).
\end{equation}
Now assume $i = i'$. Then,
\begin{equation}\label{eq:mg2}
g_{i}(\S\cup\{(i',j',k')\}) = \log\left(1+\frac{1}{|\sO_i|+|\L_i|+1}\right) < g_{i}(\S).
\end{equation}
Combining \ref{eq:mg1} and \ref{eq:mg2} we conclude $g_{i}(\S) \geq g_{i}(\S\cup\{(i',j',k')\})$ which in turn implies submodularity.
\end{proof}
By combining the results of Proposition 1 and Proposition 2, and by employing the matrix inversion lemma \cite{horn2012matrix}, we obtain the following theorem about the proposed objective function $u(\S)$.
\begin{theorem}
The utility set function $u(\S)$ is a monotone, weak submodular function, $u(\emptyset) = 0$, and
\begin{align}
u_{(i,j,k)}(\S) = f_{(i,j,k)}(\S) + \gamma g_{(i,j,k)}(\S).
\label{eq:u_marginal}
\end{align}
\end{theorem}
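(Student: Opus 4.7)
The plan is to dispatch the four assertions of the theorem in order, leaning on Propositions~1 and 2 as building blocks and using the matrix inversion lemma for the remaining analytic step about $f$.

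First, I would verify $u(\emptyset) = 0$ by direct substitution: in $f(\emptyset)$ the sum over $(i,j,k)\in\S$ is empty, so each summand collapses to $\mathrm{Tr}(\F_{i,t}^{-1}) - \mathrm{Tr}(\F_{i,t}^{-1}) = 0$, while $g(\emptyset)$ is an empty sum and thus vanishes. Together these yield $u(\emptyset) = 0$.

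Next I would establish monotonicity of $u$. Since $u = f + \gamma g$ and $\gamma \ge 0$, it suffices to show monotonicity of each piece. For $g$, monotonicity is given by Proposition~2. For $f$, I would argue directly on the marginal gain: adding a triple $(i,j,k)$ to $\S$ amounts to performing a rank-one positive update $\frac{1}{\sigma^2}\vh_{j_k}\vh_{j_k}^\top$ on the matrix $\F_{i,t}+\sum_{(i,j,k)\in\S}\frac{1}{\sigma^2}\vh_{j_k}\vh_{j_k}^\top$ appearing inside the $i$-th trace of \eqref{eq:f}. By the Sherman--Morrison identity (the matrix inversion lemma invoked in the theorem statement), the resulting change in the trace equals $\|\vM^{-1}\vh_{j_k}\|^2 / (\sigma^2 + \vh_{j_k}^\top \vM^{-1} \vh_{j_k})$ up to a positive factor, where $\vM$ denotes the current matrix; this is nonnegative whenever $\vM \succ 0$. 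Hence $f_{(i,j,k)}(\S) \ge 0$, proving $f$ is monotone, and consequently $u$ is monotone.

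For weak submodularity, the key observation is that for any $\S \subset \T \subset \X$, $i \in \X\setminus\T$ with $|\T\setminus\S| = l$,
\begin{equation*}
\frac{u_i(\T)}{u_i(\S)} \;=\; \frac{f_i(\T) + \gamma g_i(\T)}{f_i(\S) + \gamma g_i(\S)} \;\le\; \max\!\left(\frac{f_i(\T)}{f_i(\S)},\,\frac{g_i(\T)}{g_i(\S)}\right),
\end{equation*}
because $\frac{a+b}{c+d}$ is bounded by $\max(a/c,\,b/d)$ whenever all terms are nonnegative (which holds by the monotonicity established above and Proposition~2). Taking the maximum over $l$ and over valid triples gives $\mathcal{C}_u \le \max(\mathcal{C}_f, \mathcal{C}_g)$. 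By Proposition~2 the function $g$ is submodular, so $\mathcal{C}_g \le 1$; combined with the bound $\mathcal{C}_f \le (2\lambda_M/\lambda_m)^3$ from Proposition~1 (valid under the hypothesis \eqref{eq:cond1}), this yields $\mathcal{C}_u \le (2\lambda_M/\lambda_m)^3$, a finite constant, so $u$ is weak submodular. Finally, the marginal decomposition $u_{(i,j,k)}(\S) = f_{(i,j,k)}(\S) + \gamma g_{(i,j,k)}(\S)$ is immediate from linearity of the marginal-gain operator $h \mapsto h(\S\cup\{\cdot\}) - h(\S)$ applied to $h = f + \gamma g$.

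The main obstacle is the curvature step: one must be careful that the elementary inequality used for $\mathcal{C}_u$ requires all four quantities $f_i(\S), f_i(\T), g_i(\S), g_i(\T)$ to be strictly positive (with the convention that vanishing marginals can be excluded from the $\max$ in the definition of $\mathcal{C}$), which is why the monotonicity argument via Sherman--Morrison must be secured before the curvature bound is invoked. Everything else is routine once Propositions~1--2 are in hand.
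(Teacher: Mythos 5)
Your proof is correct and follows essentially the same route as the paper's: decompose $u=f+\gamma g$, invoke Propositions~1 and~2 for the two pieces, and use the matrix inversion (Sherman--Morrison) lemma for the marginal gain of $f$. The only difference is one of detail --- the paper simply asserts that the sum of a monotone weak submodular function and a monotone submodular function is monotone weak submodular with $\mathcal{C}_u \le (2\lambda_M/\lambda_m)^3$, whereas you justify this via the mediant inequality $\frac{a+b}{c+d}\le\max(a/c,\,b/d)$, which is a welcome clarification rather than a departure.
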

\begin{proof}
First note that it clearly holds that $u(\emptyset) = f(\emptyset) +\gamma g(\emptyset) = 0$. Furthermore, since $u(\S) = f(\S)+ \gamma g(\S)$ is the sum of a monotone weak submodular and a submodular function, it is also monotone weak submodular and $\mathcal{C}_u \leq (2\frac{\lambda_M }{\lambda_m })^3$. 
Finally, we introduce $g_{(i,j,k)}(\S)$ in \ref{eq:g_marginal} and recursively find $f_{(i,j,k)}$ as
\begin{align}
f_{(i,j,k)}(\S) &= \frac{\vh_{j_k}^{\top}\F_{i,\S}^{-2}\vh_{j_k}}{\sigma_j^2 + \vh_{j_k}^{\top} \F_{i,\S}^{-1} \vh_{j_k}}, \\
\F_{i,\S \cup (i,j,k)}^{-1} &= \F_{i,\S}^{-1} - \frac{\F_{i,\S}^{-1}\vh_{j_k}\vh_{j_k}^{\top}\F_{i,s}^{-1}}{\sigma_j^2 + \vh_{j_k}^{\top}\F_{i,\S}^{-1}\vh_{j_k}},
\label{eq:f_marginal}
\end{align}
where 
\begin{align*}
\F_{i,\S} = \F_{i,t} + \sum_{(i,j,k)\in \S} \frac{1}{\sigma_j^2} \vh_{j_k}\vh_{j_k}^{\top}.
\end{align*}
\end{proof}
\section{Greedy Exchange of Observations}\label{sec:alg} 
The analysis of combinatorial characteristics of the proposed utility set function reveals that the optimization problem in \ref{eq:balance} is that of maximizing a monotone weak submodular set function subject to cardinality constraint. Therefore, in order to find a near-optimal scheduling of the observations exchange, we resort to their greedy selection. More specifically, at each time step $t$, the relay node observes the performances of the local nodes and calculates the marginal gain of the possible distribution patterns $(i,j,k)$ using \ref{eq:u_marginal}. Then it adds the pattern yielding the highest marginal gain to the scheduling set $\S_t$ and updates the performance records $\F_{i,\S}$ for each node using \ref{eq:f_marginal}. After repeating this procedure $K$ times, the relay node sends the instructions for the exchange of observations to the individual nodes.

\renewcommand\algorithmicdo{}	
\begin{algorithm}[t]
\caption{Greedy Observation Distribution Scheduling}
\label{alg:greedy}
\begin{algorithmic}[1]
    \STATE \textbf{Input:}  $\P_{i,t-1}$, $\vH_{i,t}$, $K$, for $i=1,\dots,m$.
    \STATE \textbf{Output:} Subset $\S_t\subseteq \X$ with $|S_t|=K$.
    \STATE Initialize $\S_t =  \emptyset$, $\F_{i,\S}=\F_{i,t}$ for $i=1,\dots,m$.
	\FOR{$k = 1,\dots, K$}
            \STATE $(i,j,k) = \argmax_{(i',j',k')\notin \S_t} u_{(i',j',k')}(\S_t)$.\vspace{0.2cm}
            \STATE Update $\S_t \leftarrow \S_t \cup \{(i,j,k)\}$.\vspace{0.2cm}
            \STATE Update $\F_{i,\S}^{-1} \leftarrow \F_{i,\S}^{-1}-\frac{\F_{i,\S}^{-1}\vh_{j_k}\vh_{j_k}(t)^\top\F_{i,\S}^{-1}}{\sigma_j^2+\vh_{j_k}^\top\F_{i,\S}^{-1}\vh_{j_k}}$. \vspace{0.2cm}
	\ENDFOR
	\RETURN $S_t$.
\end{algorithmic}
\end{algorithm}

The proposed method is formalized as Algorithm \oldref{alg:greedy}. Performance and complexity of the greedy algorithm are characterized by the following theoretical results.
\begin{theorem}
Let $\mathcal{C}_u$ be the maximum element-wise curvature of $u(\S)$, i.e., the objective function of the balanced performance promoting scheduling problem in \ref{eq:balance}. Let $\{\sO_i\}_{(i,-,-) \in \S}$ denote the set $\S$ of the observations selected to be communicated through the network by Algorithm 1 at time $t$, and let $\{\sO_i^{\ast}\}_{(i,-,-) \in \S^{\ast}}$ be the optimal schedule
of \ref{eq:balance} such that $\sum_{(i,-,-) \in \S} |\sO_i| \leq K$, and $\sum_{(i,-,-) \in \S^{\ast}}|\sO_i^{\ast}| \leq K$. Then, it holds that
\begin{equation}
u(\S) \geq (1-e^{-\frac{1}{c}})u(\S^\ast),
\end{equation}
where $c = \max\{1, \mathcal{C}_u\}$. Furthermore, the computational complexity of Algorithm \oldref{alg:greedy} is $\mathcal{O}(Kmn^2\sum_{i=1}^m{|\L_{i,t}|})$.
\end{theorem}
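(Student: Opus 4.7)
The plan is to split the argument into two pieces that mirror the two assertions of the theorem: an approximation guarantee coming from greedy maximization of a monotone weak submodular function, and a running-time bound coming from counting candidate triples and using Sherman--Morrison updates.

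For the approximation ratio, I would invoke Theorem~1 to know that $u(\S)$ is monotone with $u(\emptyset)=0$ and has maximum element-wise curvature $\mathcal{C}_u$ bounded as in Proposition~1, so that $c=\max\{1,\mathcal{C}_u\}$ is finite. Then I would adapt the classical Nemhauser--Wolsey--Fisher greedy analysis. Let $\S_\ell$ denote the greedy set after $\ell$ selections, and let $v^\star_{\ell+1}$ be the triple picked in iteration $\ell+1$. The key step is to upper bound $u(\S^\ast)-u(\S_\ell)$ by the sum of marginal gains of adding the elements of $\S^\ast\setminus\S_\ell$ to $\S_\ell$; this uses monotonicity to write $u(\S^\ast)\le u(\S^\ast\cup\S_\ell)$ and then a telescoping decomposition. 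Each of those individual marginal gains $u_v(\S_\ell\cup\T)$ for an intermediate prefix $\T$ satisfies $u_v(\S_\ell\cup\T)\le c\, u_v(\S_\ell)$ by the definition of $\mathcal{C}_u$ (here is where $c$ enters, and the $\max$ with $1$ simply covers the truly submodular case). Since greedy chose the maximizer, $u_v(\S_\ell)\le u_{v^\star_{\ell+1}}(\S_\ell)=u(\S_{\ell+1})-u(\S_\ell)$, giving
\begin{equation*}
u(\S^\ast)-u(\S_\ell)\;\le\; cK\bigl(u(\S_{\ell+1})-u(\S_\ell)\bigr).
\end{equation*}
Rearranging yields the linear recurrence $u(\S^\ast)-u(\S_{\ell+1})\le (1-\tfrac{1}{cK})(u(\S^\ast)-u(\S_\ell))$, and iterating from $\ell=0$ with $u(\S_0)=0$ gives $u(\S_K)\ge\bigl(1-(1-\tfrac{1}{cK})^K\bigr)u(\S^\ast)\ge(1-e^{-1/c})u(\S^\ast)$ via the standard inequality $(1-1/x)^x\le 1/e$.

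For the complexity, I would count work per outer iteration of Algorithm~1. The ground set has size on the order of $m\sum_{i=1}^m|\L_{i,t}|$ (receiver index in $[m]$, and a sender-measurement pair $(j,k)$ from $\cup_j\L_j$). For each candidate triple $(i',j',k')$, evaluating the marginal gain via \ref{eq:f_marginal} plus \ref{eq:g_marginal} reduces to forming the quadratic forms $\vh_{j_k}^\top\F_{i,\S}^{-1}\vh_{j_k}$ and $\vh_{j_k}^\top\F_{i,\S}^{-2}\vh_{j_k}$, which cost $\mathcal{O}(n^2)$ given the stored $\F_{i,\S}^{-1}$. Summing over candidates gives $\mathcal{O}(mn^2\sum_i|\L_{i,t}|)$ per iteration; the rank-one update of the selected $\F_{i,\S}^{-1}$ via Sherman--Morrison is an additional $\mathcal{O}(n^2)$, hence absorbed. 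Multiplying by $K$ outer iterations yields the advertised $\mathcal{O}\bigl(Kmn^2\sum_{i=1}^m|\L_{i,t}|\bigr)$.

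I expect the main obstacle to be the weak-submodularity step in the greedy analysis: one has to invoke the element-wise curvature inequality along a telescoping path rather than between two fixed sets, and then carefully argue that the same constant $c$ bounds every term so that the factor $cK$ emerges cleanly. Once that inequality is in place, the rest is a mechanical recurrence, and the complexity count is straightforward bookkeeping.
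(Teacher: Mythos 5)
Your argument is correct; in fact the paper states this theorem without giving any proof at all, and the reasoning it implicitly relies on is exactly what you spell out: the Nemhauser--Wolsey--Fisher telescoping argument adapted via the element-wise curvature bound $u_v(\S_\ell\cup\T)\le c\,u_v(\S_\ell)$ to obtain the recurrence $u(\S^\ast)-u(\S_{\ell+1})\le(1-\tfrac{1}{cK})(u(\S^\ast)-u(\S_\ell))$, plus the $\mathcal{O}(n^2)$-per-candidate Sherman--Morrison bookkeeping for the complexity. The one point worth making explicit is that the curvature inequality applies along your telescoping path because $u$ is monotone (so all marginals are nonnegative) and each element of $\S^\ast\setminus\S_\ell$ lies outside the intermediate prefix to which it is added, which your decomposition respects.
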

\section{Simulation Results}\label{sec:sim}

In this section, we study the performance of the proposed algorithm in different scenarios. In particular, we simulate a fully connected network having 3 nodes, set the dimension of the state vector to $n=50$,
and assume that a relay node is given information about the observation matrices of the individual nodes. For the state-transition matrix of the linear dynamical system, we set $\vA_t = 0.8\vI_n$ and randomly generate partial observation matrices $\vH_{i,t}$. The observation patterns of the nodes vary with different runs; however, we preserve the rank of the matrices -- in particular, $\text{rank}(\vH_{1,t}) = 21$, $\text{rank}(\vH_{2,t})=37$ and $\text{rank}(\vH_{3,t})=5$. We assume a zero-mean Gaussian process noise and a zero-mean Gaussian observation noise at individual nodes with covariance matrices $\vQ = 0.2\vI_n$ and $\vR_{i,t}=0.05\vI_n$, respectively.
We run 10 Monte-Carlo simulations and select time horizons for each run as $T=20$.

We first consider the MSE performances of the individual nodes in the network under regularized ($\gamma > 0$) and non-regularized ($\gamma = 0$) settings. The total number of measurements that can be exchanged among the units is set to $K=40$. The regularization coefficients are set to $\gamma=200$ and $\gamma=0$; the large difference between the regularization coeffcients will emphasize the effect of the balancing term on the individual node performance. In Fig. \oldref{fig:nodeVise}, we observe that the regularization term balances the individual node performances. We also observe that in the absence of regularization the nodes exhibit temporally rapidly varying MSE performance. This is primarily due to a deterministic nature of the greedy selection of the set of observations shared among the agents. In particular, when the regularization term is set to zero, at each time step the algorithm greedily schedules most of the observations to the node with the highest MSE. On the other hand, the non-zero regularization term ensures a temporally smoother and balanced MSE performances of the individual units.

To study the effect of the number of shared observations, we vary $K$ from $20$ to $100$. We compare the total network MSE (the sum of individual MSEs) at the last time step of the regularized and non-regularized schemes in Fig. \oldref{fig:rateVSglobal}. We observe that the non-regularized scheme always yields a lower total MSE as compared to the regularized one; this is expected since it completely focused on minimization of the total MSE and ignores balance of the individual units performances. For $K=100$, we observe that both networks essentially perform the same, which is expected due to sharing essentially all the observations in the network.

Finally, we investigate the effect of the regularization parameter $\gamma$ on balancing the individual performances of the nodes in the network. We set $K=40$ and vary $\gamma$ for the regularized network from $0.1$ to $100$ with log-scale increments. We compare the sum of pairwise MSE distances of the nodes in the regularized and non-regularized networks in Fig \oldref{fig:gammaVSglobal}. We observe that the use of higher regularization coefficients results in a more balanced performances between individual nodes.
\begin{figure}[t]
\centering
    \includegraphics[width=0.49\textwidth]{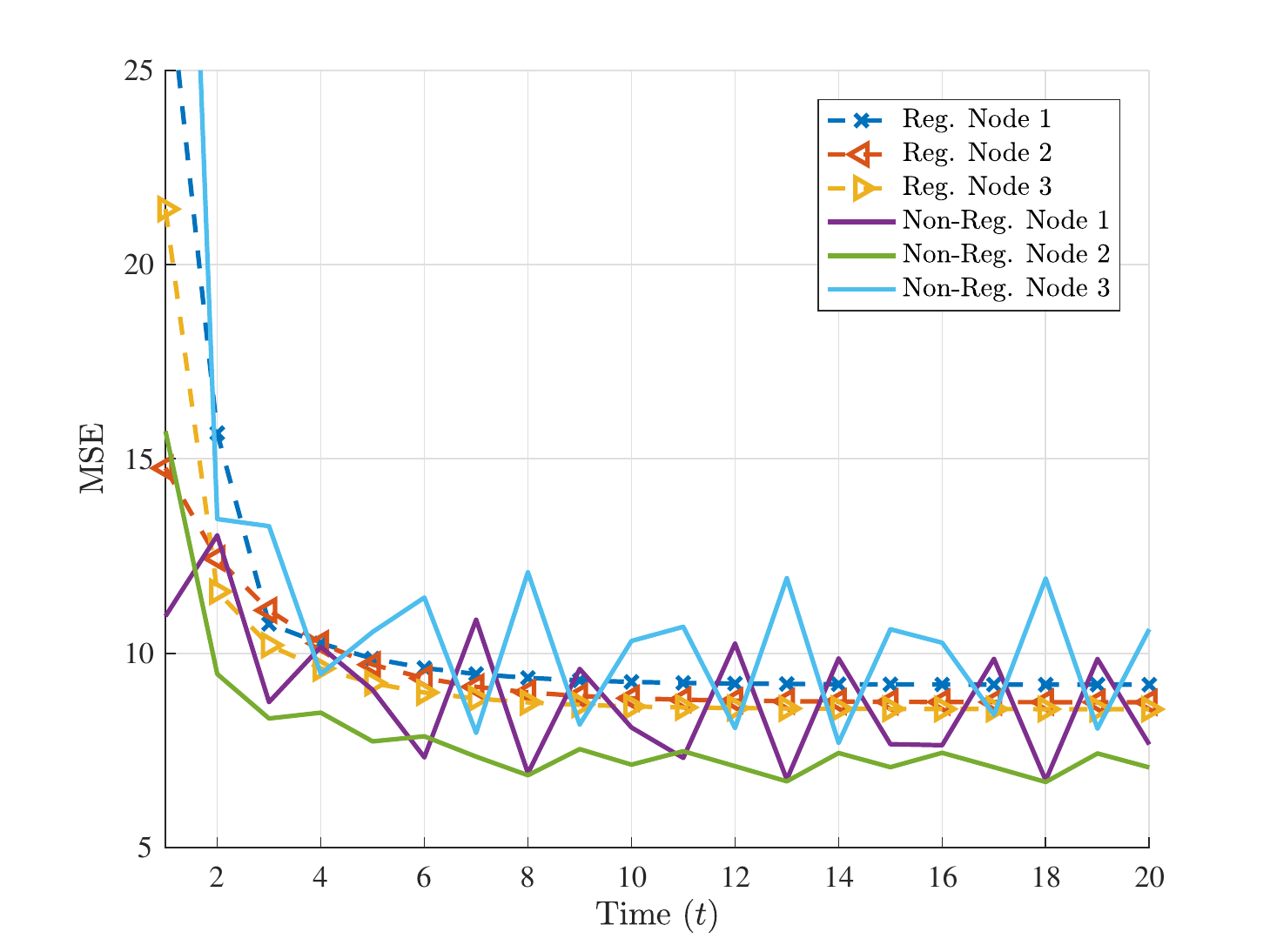}
    \caption{The MSEs of the individual units for the balanced
    performance promoting (labeled as "Reg.") and unbalanced
    (labeled as "Non-Reg.") measurement exchange schemes. The
    network contains three sensing units performing distributed
    state estimation by means of Kalman filtering. }
\label{fig:nodeVise}
\vspace{-0.6cm}
\end{figure}
\begin{figure}[t]
\centering
    \includegraphics[width=0.49\textwidth]{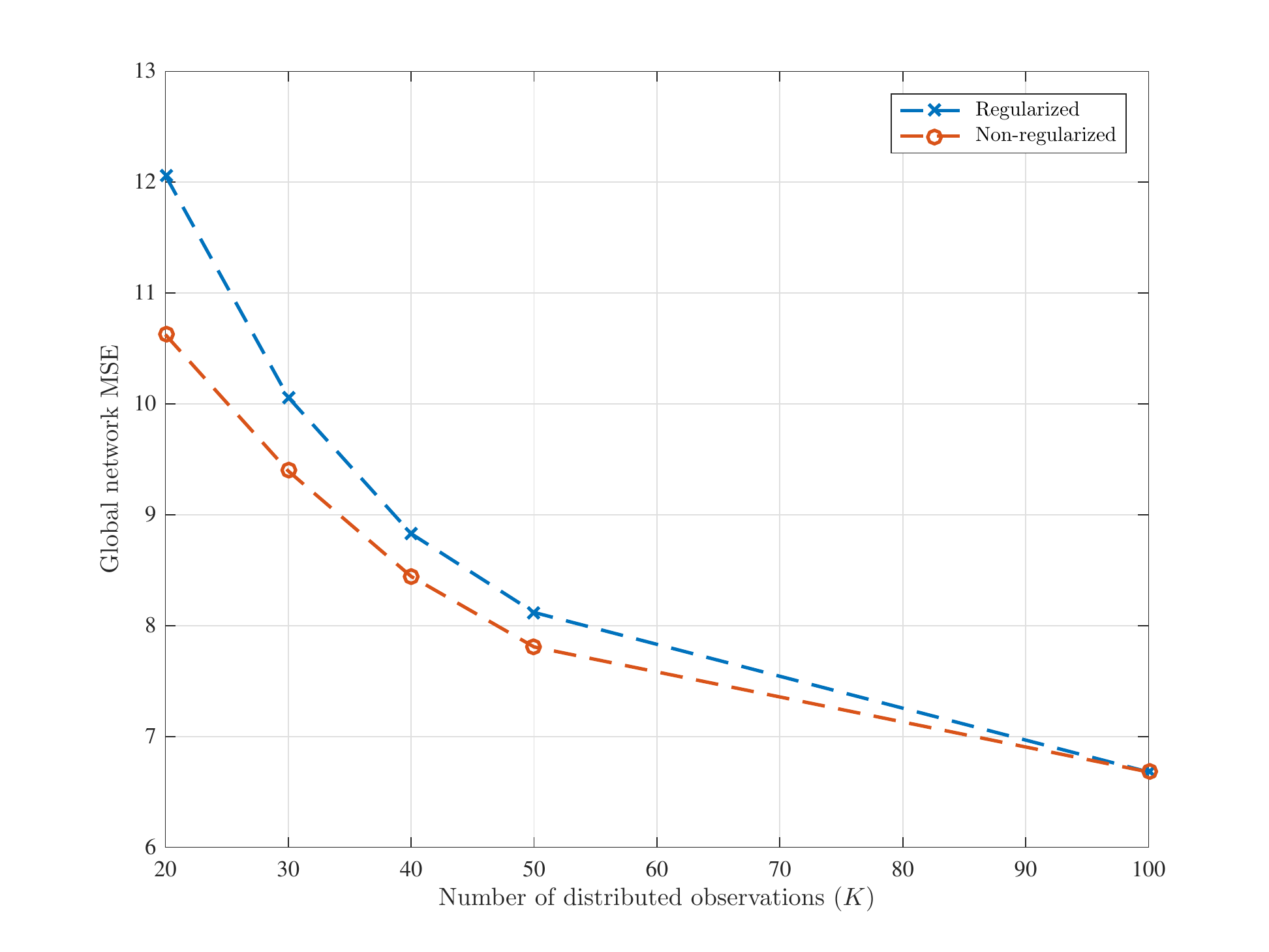}
    \caption{A comparison of the total MSE of the proposed balance-promoting and unbalanced measurement sharing schemes versus the number of shared observations $K$. As $K$ increases, the gap between the two schemes is reduced.}
\label{fig:rateVSglobal}
\vspace{-0.45cm}
\end{figure}
\begin{figure}[t]
\centering
    \includegraphics[width=0.49\textwidth]{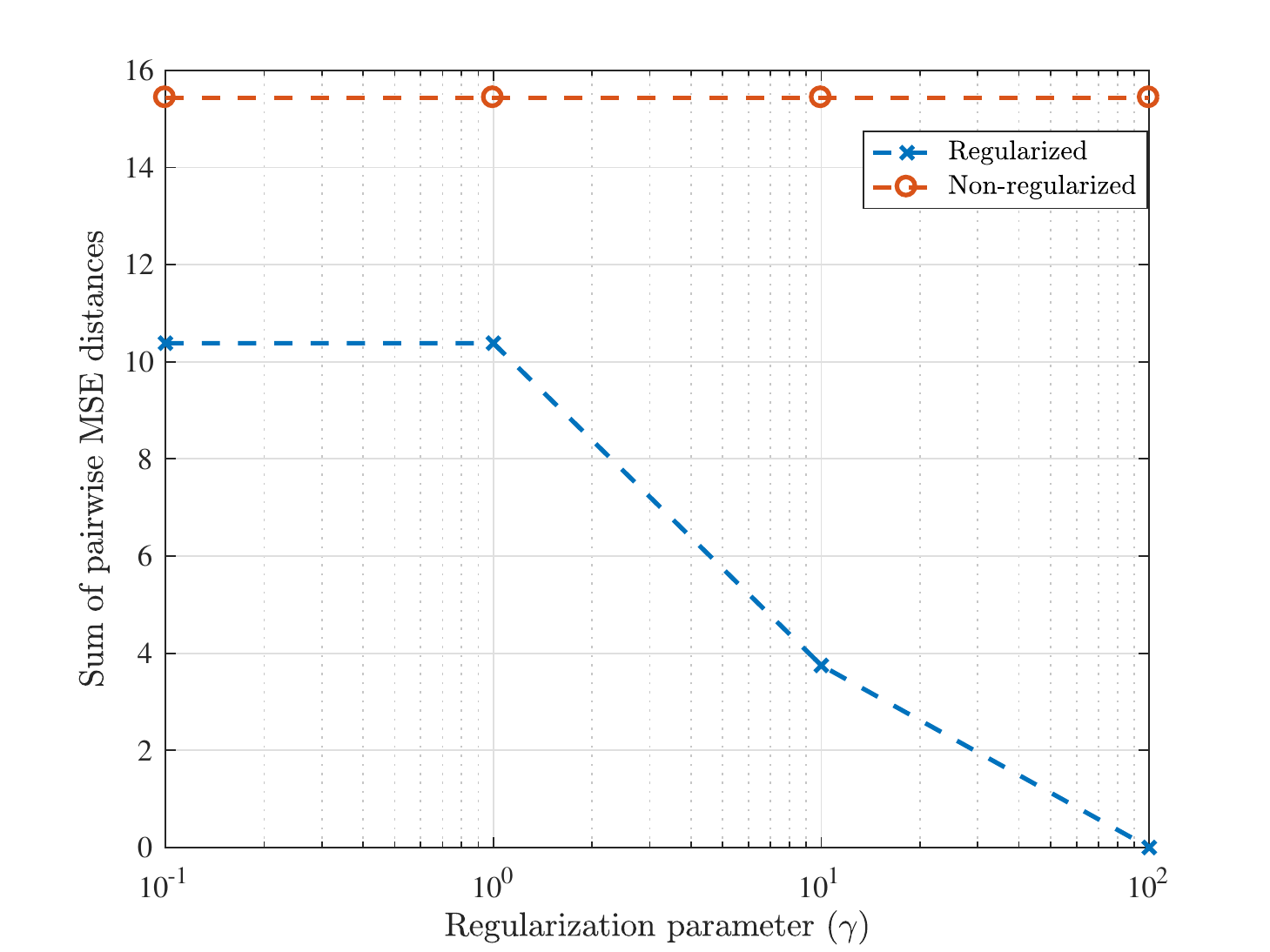}
    \caption{A comparison of sum of pairwise node-level MSE distances for varied regularization parameter $\gamma$. As $\gamma$ increases, the proposed balance-promoting framework attempts to decrease the MSE distances of the individual nodes across the network.}
\label{fig:gammaVSglobal}
\vspace{-0.5cm}
\end{figure}
\section{Conclusion} \label{sec:concl}

In this paper, we considered the task of distributed state estimation in a communication-constrained network of sensing units. The network consists of units with sensing and communication capabilities as well as a relay center that schedules the exchange of information in the network. In addition to minimizing the total mean-square error, a certain level of performance balancing is desired throughout the network. We formulated this task as that of maximizing a monotone objective function subject to cardinality constraint. The proposed objective function is the sum of two monotone set functions: the first function, that is weak submodular, is inversely related to the total MSE of the network while the second one is submodular and favors a schedule of observation exchange that promotes balanced performance of individual units. Since the proposed formulation is NP-hard, we developed a simple greedy algorithm and theoretically analyzed its performance by deriving a constant factor approximation on its achievable utility as compared to the utility attained by the optimal schedule. Through a series of simulations, we demonstrated that the proposed formulation minimizes the total MSE of the network while balancing individual units performance. As part of the future work, we will extend the proposed framework to a network of potentially nonlinear dynamical systems with multiple relay centers and analyze its performance.

\bibliographystyle{ieeetr}\footnotesize
\bibliography{refs}

\end{document}